\newtheorem{thm}{Theorem}[section]
\theoremstyle{definition}
\newtheorem{exa}[thm]{Example}%%%
\theoremstyle{remark}
\newtheorem*{rem}{Remark}%%% * means no numbering
\title{Geometry and Combinatorics of Crystal Melting}
\author{Masahito \textsc{Yamazaki}\footnote{Princeton Center for
Theoretical Science, Princeton University, NJ 08544, USA.
\newline
 e-mail: \texttt{masahito@princeton.edu}}
}
\keywords{\textit{plane partition, matrix model, wall crossing,
Donaldson-Thomas invariants}}         %optional
\newcommand{\pg}{\overset{+}{\succ}}
\newcommand{\pl}{\overset{+}{\prec}}
\newcommand{\mg}{\overset{-}{\succ}}
\newcommand{\ml}{\overset{-}{\prec}}
\begin{document}
%
% The text goes here.  
% Be sure to use the appropriate "theorem-like" environment as 
% is the following examples.  Never use plain TeX commands for these, as
% they will cause interference with the styles of other papers. 

\maketitle

%\tableofcontents      %optional
\begin{abstract}      %optional
We survey geometrical and especially combinatorial aspects 
%of the recent developements in the theory 
of generalized Donaldson-Thomas invariants 
(also called BPS invariants) for toric
 Calabi-Yau manifolds, emphasizing the role
 of plane partitions and their generalizations in the 
recently proposed crystal melting model. We also comment on equivalence
 with a vicious walker model and the matrix model representation of the partition function \footnote{This is
 a survey article based on the author's talk in the conference
 ``Developments in Quantum Integrable Systems,'' RIMS, Kyoto University,
 June 2010.}.

%This is a submitted to RIMS K\^oky\^uroku Bessatsu.
\end{abstract}

\section{Introduction}

The study of plane partitions, a three-dimensional generalization of
partitions, has a long history of more than a century in mathematics \cite{MacMahon,Stanley}. 
There has recently been a renewed interest in this old topic, both among
mathematicians and physicists alike, due to the pioneering discovery that
topological A-model \cite{GV} on toric Calabi-Yau manifolds \cite{AKMV} can be
described by a statistical mechanical model of plane
partitions \cite{ORV,INOV}. 
In more mathematical language, plane partitions count 
Donaldson-Thomas (DT) invariants \cite{DT,Thomas}, whose partition function
is equivalent \cite{MNOP1,MOOP} to that of the Gromov-Witten invariants under suitable parameter identifications.

There is an interesting twist to this story, which is the topic of more
recent studies in this field. There
we study ``generalized Donaldson-Thomas invariants'' \cite{JS,KS}, which
depend on moduli (mathematically stability conditions, or physically
complexified K\"ahler moduli \footnote{These two notions are precisely speaking not the same, but the differences will not be important in this paper.}). These ``invariants'' are invariant under
 a generic deformation of moduli, 
but can jump when we cross real codimension loci (called walls of
marginal stability, which divide the moduli space into chambers) 
in the moduli space.
This jumping is called the wall crossing
phenomena, and there are general formulas \cite{JS,KS,DenefM} which govern this jumping phenomena.
 Generalized DT invariants are indeed generalizations of the 
original DT invariants, in the sense that the former
coincide with the latter only in a specific chamber (hereafter called the topological string chamber) of the moduli space.

Given the richness of these new invariants, the natural question is
whether there are combinatorial counterparts to this geometric story.
%, generalizing
%the crystal melting model of \cite{ORV}. 
The goal of the present article is to provide an answer to this
question. Generalized DT invariants on toric Calabi-Yau
manifolds are described by a statistical mechanical
model of ``crystal melting'' \cite{Szendroi,MR,OY1}, formulated here 
as an enumeration problem of
 plane partitions and their generalizations. We will also comment on the
 equivalence with a vicious walker model, following \cite{OSY}.
No prior knowledge in this field is assumed,
 and this paper is intended to be self-contained, at least as far
as the combinatorial aspects are concerned.

This article is organized as follows. In section 2 we define our combinatorial partition function as a sum over suitable evolution of partitions. In section 3 we comment on the representation of the partition function as a unitary matrix integral. The derivation of this matrix integral is given in section 4, based on an equivalence with a vicious walker model. Appendix contains an introduction to generalized Donaldson-Thomas invariants, which readers can consult for geometric side of the story.

\section{Definition of the Model}

We begin with the following theorem, which states that the partition function
 for generalized DT invariants $Z_{\rm gDT}$ (see Appendix) on a toric Calabi-Yau manifold can be computed exactly by purely combinatorial methods:
\begin{thm}[Szendr\"oi \cite{Szendroi}, Mozgovoy-Reineke \cite{MR},
 Ooguri-Yamazaki \cite{OY1}] \\
For a toric Calabi-Yau manifold, the partition function for generalized DT invariants can be written as
\begin{align}
Z_{\rm gDT}=Z_{\rm crystal},
\label{DT=crystal}
\end{align}
where the statistical mechanical model in the right hand side can equivalently
formulated as (1) a crystal melting model (a configuration of molten
 atoms), (2) dimer model, or (3) a generalization of
 plane partitions (an evolution of partitions) \footnote{The third formulation is available only when the toric Calabi-Yau manifold has no compact 4-cycles. All the examples in this paper fall into this category.}.  
\end{thm}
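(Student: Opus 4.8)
The plan is to establish \eqref{DT=crystal} by passing through a \emph{noncommutative} model built directly from the toric data, and then to reduce the geometric invariant to a purely combinatorial count by means of a torus action. To the toric Calabi-Yau one first associates a quiver with potential $(Q,W)$ --- equivalently a brane tiling, i.e.\ a bipartite graph drawn on $T^2$ --- whose Jacobian algebra $J=\mathbb{C}Q/(\partial W)$ plays the role of a noncommutative crepant resolution of the singularity. In the chamber relevant to $Z_{\rm gDT}$ (the noncommutative, or Szendr\"oi, chamber), the moduli space parametrizes cyclic framed $J$-modules of each dimension vector $\vec n$, and $Z_{\rm gDT}=\sum_{\vec n}\mathrm{DT}(\vec n)\,q^{\vec n}$, where $\mathrm{DT}(\vec n)$ is the Behrend-weighted Euler characteristic of that moduli space. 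Thus the first task is to phrase both sides of the identity over the same algebra $J$.

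The heart of the argument is a torus localization. The $(\mathbb{C}^\times)^2$ (or $(\mathbb{C}^\times)^3$) action on the toric Calabi-Yau induces one on each moduli space, and because the obstruction theory here is symmetric, the Behrend-weighted count is computed by the fixed locus alone. I would show that the fixed points are isolated and are in bijection with torus-invariant ideals of $J$: starting from the single atom corresponding to the framing vertex, repeated action of the arrows of $Q$ generates an infinite \emph{crystal} --- the universal cover of the tiling --- and a finite-codimension invariant submodule corresponds precisely to a finite set of atoms that may be removed consistently with the arrows and relations $\partial W$, that is, to a \emph{molten crystal}. This bijection is exactly formulation (1), and tracking $q^{\vec n}$ against the number of molten atoms of each color reproduces $Z_{\rm crystal}$.

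The two remaining reformulations are then combinatorial repackagings of the same fixed-point set. For (2), the crystal lives on the universal cover of the periodic quiver, dual to the faces of the brane tiling, and the melting rule is equivalent to a height function, hence to a dimer (perfect-matching) configuration on the bipartite graph; the passage between ``molten atoms'' and ``dimers'' is the standard height-function dictionary. For (3), the hypothesis of no compact $4$-cycles forces the tiling to be layered, so that the crystal projects onto a linearly ordered sequence of two-dimensional partitions whose successive members interlace --- an ``evolution of partitions'' governed by the transfer operators of the tiling. In the base case of $\mathbb{C}^3$ these are honest plane partitions and one recovers the MacMahon function $\prod_{n\ge1}(1-q^n)^{-n}$, which provides a concrete check.

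The main obstacle is the \emph{sign}. Because the moduli space is the critical locus of $W$ and is generally singular, $\mathrm{DT}(\vec n)$ is not the naive Euler characteristic but the Behrend-function-weighted one, so the claimed equality with a statistical-mechanical $Z_{\rm crystal}$ requires that each isolated fixed point contribute with exactly the sign produced by the crystal weights. Establishing this is precisely where the dimensional-reduction and motivic vanishing-cycle techniques of Behrend--Bryan--Szendr\"oi and Mozgovoy--Reineke enter: they reduce the three-dimensional, obstructed count to a weighted count on the critical locus with controlled signs, and verifying that these signs coincide with the combinatorial weights is the step I expect to be the most delicate.
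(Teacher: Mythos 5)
You should first be aware that the paper does not prove this theorem at all: it is a survey, and \eqref{DT=crystal} is imported as a black box from the works to which it is attributed (Szendr\"oi, Mozgovoy--Reineke, Ooguri--Yamazaki), the paper's own contribution in that section being limited to \emph{defining} the right-hand side $Z_{\rm crystal}$ in formulation (3) and then quoting Nagao's chamber-by-chamber refinement \eqref{generaleq}. So there is no in-paper argument to compare against; what can be said is that your outline is a faithful reconstruction of the strategy of the cited sources: the quiver-with-potential/brane-tiling setup with the Jacobian algebra as noncommutative crepant resolution, the identification of $Z_{\rm gDT}$ in the noncommutative chamber with Behrend-weighted Euler characteristics of moduli of framed cyclic $J$-modules, torus localization with isolated fixed points in bijection with molten crystals, the height-function dictionary to dimer configurations, and the layered interlacing-partition structure forced by the absence of compact 4-cycles are exactly the ingredients of those proofs.

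Two caveats on your proposal taken on its own terms. First, it addresses only the noncommutative (Szendr\"oi) chamber, whereas the theorem --- read together with the Remark that follows it and with \eqref{generaleq} --- is a statement with moduli dependence, valid chamber by chamber; passing from the noncommutative point to the other chambers is not achieved by your localization argument alone, but requires the wall-crossing analysis of Nagao--Nakajima and Nagao, which the paper states as a separate theorem. Second, you flag but do not carry out the sign comparison between the Behrend-function weights at the fixed points and the manifestly positive crystal weights; this is indeed the delicate step (the paper itself relegates it to the footnote ``the signs are determined from $\rho$''), and without it your argument establishes the equality of $Z_{\rm gDT}$ and $Z_{\rm crystal}$ only up to sign redefinitions $q_i \to \pm q_i$ of the counting variables. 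As a sketch of how the quoted result is actually proved in the literature your proposal is accurate; as a proof it remains open at precisely that point.
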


The goal of this section is to define $Z_{\rm crystal}$, using the plane partitions and their generalizations. We use the third formulation, developed in
 \cite{YoungBryan,NagaoVO,NY,Sulkowski}. 
See \cite{Szendroi,MR,OY1,Young} for the first and the second description.

\begin{rem}
As we discussed in the introduction, the LHS of the equation
 \eqref{DT=crystal} depends on the value of the moduli. 
Correspondingly, RHS also has moduli dependence, and 
we have different statistical mechanical models for
 different chambers of the moduli space. The schematic relation 
\eqref{DT=crystal} should be interpreted this way. See \eqref{generaleq}.
% this in more detail below.
\end{rem}

Let us begin with standard notations. A partition
$\lambda=(\lambda_i)$ is a non-increasing sequence of integers
$\lambda_1\ge \lambda_2 \ge \ldots \ge 0$ such that $\lambda_n=0$ for
sufficiently large $n$. The length $|\lambda|$ of a partition $\lambda$
is given by $|\lambda| :=\sum_i \lambda_i$.
We define a transpose $\lambda^t$ by 
$$\lambda^t{}_i:=\#\{j| \lambda_i\le j\}.$$
This is simply a graphical transpose of a partition, as will be clear
from the following example.

\begin{exa}
For $\lambda=(4,2,1)={\tiny \yng(4,2,1)}$, $|\lambda|=7$ and
 $\lambda^t=(3,2,1,1)={\tiny \yng(3,2,1,1)}$.
\end{exa}

Given two partitions $\lambda$ and $\mu$, we define $\lambda \pg \mu$ if and
only if 
$$
\lambda_i=\mu_i+1 ~~ \textrm{or} ~~ \mu_i ~~~ \textrm{for all } i.
$$
We also denote $\lambda \mg \mu$ if and only if $\lambda^t \pg \mu^t$,
or equivalently
$$
\lambda_1\ge \mu_1\ge \lambda_2 \ge \mu_2\ge \ldots \,\,.
%\label{C3evolution}
$$

\begin{exa}
Two partitions $\lambda=(4,2,1)={\tiny \yng(4,2,1)}$ and
 $\mu=(3,2,1)={\tiny \yng(3,2,1)}$ both
 satisfy $\lambda\pg \mu$ and $\lambda\mg \mu$.
\end{exa}

Now we define a plane partition (also called a 3d partition) as a sequence of partitions
$\Pi=\{\lambda(n) \}_{n\in \mathbb{Z}}$ such that
\begin{align}
\ldots \pl \lambda(-2) \pl \lambda(-1) \pl \lambda(0) \pg \lambda(1) \pg
\lambda(2) \pg \ldots,
\label{C3evolution}
\end{align}
and $\lambda(n)=\{0 \}$ when $|n|$ sufficiently large.
Define the length $|\Pi|$ of a plane partition $\Pi=\{ \lambda(n)\}$ to
be $|\Pi |=\sum_n |\lambda(n)|$. Note that this is a finite sum by the
assumption above. Our partition function is defined by
\begin{align}
Z_{\rm crystal}(q):=\sum_{\Pi} q^{|\Pi|}. 
\label{C3Z}
\end{align}
As is well-known, MacMahon's formula represents this as an infinite product
\begin{align}
Z_{\rm crystal}(q)=\prod_{k>0}(1-q^k)^{-k},
\end{align}
which is the same as the generalized DT partition function \footnote{See Appendix for summary of these invariants.}
%the topological A-model partition function 
for $\mathbb{C}^3$ \cite{GV}: 
\footnote{In the language of topological strings, $q$ is related to
the topological string coupling constant $g_s$ by 
%\begin{align}
$q=-e^{-g_s}$.
% ~~ Q=e^{-t},
%\label{topparam}
%\end{align}
}
\begin{align}
Z_{\rm crystal}(q)=Z_{\rm gDT}^{\mathbb{C}^3}(q),
\label{C3DT=crystal}
\end{align}
% (K\"ahler
%moduli of the conifold).
For this reason we hereafter denote the LHS of the above equation
\eqref{C3DT=crystal} by $Z_{\rm crystal}^{\mathbb{C}^3}$.

\begin{rem}
In the definition of a plane partition \eqref{C3evolution} we could have used 
$\mg, \ml$ instead of $\pg,\pl$. This has the effect of replacing $\{\lambda(n)\}$ by $\{\lambda(n)^t\}$, and the partition function is the same
 either way.
\end{rem}

\begin{rem}
The choice of the weight in \eqref{C3Z} is the same as in the Schur
 process of \cite{OR}. 
%Our story below for the conifold is a
% generalizati.
\end{rem}

%%%%%%% conifold %%%%%%%%%%%%%%%%%%%%%%%%%%%%
We have seen that plane partitions correspond to the simplest Calabi-Yau
geometry $\mathbb{C}^3$. Our next task is to consider a set of
partitions corresponding to the (resolved) conifold.
We again consider a sequence of partitions $\Pi=\{ \lambda(n)\}$, but
now with a slightly different interlacing conditions, with plus and
minus appearing alternatingly:
\begin{align}
\ldots \ml \lambda(-2) \pl \lambda(-1) \ml \lambda(0) \pg \lambda(1) \mg
\lambda(2) \pg \ldots ~.
\label{conifoldevolution}
\end{align}
For such a $\Pi$, we define $|\Pi|_0:=\sum_{n:\,{\rm even }}|\lambda(n)|$ and 
 similarly $|\Pi|_1:=\sum_{n:\,{\rm odd}} |\lambda(n)|$. The conifold
 partition function is then defined by
\begin{align}
Z_{\rm crystal}^{\rm conifold}(q_0,q_1):=\sum_{\Pi : \,{\rm satisfying }\,
 \eqref{conifoldevolution}} q_0^{|\Pi|_0} q_1^{|\Pi|_1}.
\label{Zconifold}
\end{align}
Then \eqref{DT=crystal} in this example states that 
\begin{align}
  Z_{\rm crystal}^{\rm conifold}(q_0, q_1)=Z_{\rm gDT}^{\rm conifold}(q,Q)
\end{align}
under the parameter identification
\begin{align}
q=q_0 q_1, ~Q=q_1.
\label{conifoldparam}
\end{align}
Infinite product expression for this partition function is known \cite{GV}
from the study of the topological string: \footnote{In the topological strings, 
%\begin{align}
$q=-e^{-g_{\rm s}}, ~~ Q=-e^{-t}$,
%\end{align}
where $g_s$ is the topological string coupling constant, and $t$ is the K\"ahler moduli of the resolved conifold.
}
\begin{align}
Z_{\rm gDT}^{\rm conifold}(q,Q)=M(q)^2 \prod_{k>0}(1+q^k Q)^k \prod_{k>0}
 (1+q^k Q^{-1})^k, 
\end{align}
The corresponding statement for $Z_{\rm crystal}^{\rm conifold}$ was
shown 
combinatorially in \cite{Young}.

\bigskip
The discussion above is a little bit imprecise because we did not
specify the moduli dependence of the generalized DT invariants. 
We therefore consider the following more general partition function
which includes the moduli dependence and applies to more general toric geometries.

First, fix an integer $L$
and a function $\rho: \{1/2,3/2,\ldots, L-1/2\}\to \{\pm 1\}$. We can
 periodically extend $\rho$ to a map $\sigma:\mathbb{Z}_{1/2}\to \{ \pm 1\}$,
where $\mathbb{Z}_{1/2}$ is a set of half-integers. This will determine a toric Calabi-Yau geometry.
 Second, to fix a moduli dependence, we define a bijection $\theta: \mathbb{Z}_{1/2}\to \mathbb{Z}_{1/2}$
 such that  \footnote{This notation including half-integers looks cumbersome, but is useful to see the action of the Weyl group of the affine Kac-Moody algebra \cite{Nagao1, NagaoVO, NY}. The parametrization by $\theta$ actually covers only half of the chambers, but
sufficient for our purposes here. The partition function becomes a
finite product in the remaining half. 
%For comparison with literature, note that the $\theta$ here is $\theta^{-1}$ in \cite{NY}.
}
\begin{align}
\theta(h+L)=\theta(h)+L ~\textrm{ for all }~ h \in \mathbb{Z}_{1/2},
\end{align}
 and 
\begin{align}
\sum_{i=1}^L \theta\left(i-\frac{1}{2}\right)=\sum_{i=1}^L \left(i-\frac{1}{2}\right).
\end{align}
We then define a generalized plane partition of type $(L,\rho,\theta)$ 
 (whose totality we are going to denote by $\mathcal{P}^{(L,\rho, \theta)}$) to be a sequence of partitions $\Pi=\{ \lambda(n)\}$ such that
\begin{align}
\begin{split}
 \lambda(i) \overset{\sigma\circ\theta(i+1/2)}\prec \lambda(i+1) ~~{\rm
 for } ~~ \theta\left(i+\frac{1}{2}\right)<0,\\
 \lambda(i) \overset{\sigma\circ\theta(i+1/2)}\succ \lambda(i+1) ~~{\rm
 for } ~~ \theta\left(i+\frac{1}{2}\right)>0.
\end{split}
\label{generalevolution}
\end{align}
We define $|\Pi|_i:=\sum_{n\equiv i \mod L}|\lambda(n)|$ for
$i=0,\ldots, L-1$. We also define
\[
q^\theta_i:=
\begin{cases}
q_{\theta^{-1}(i-1/2)+1/2}\cdot q_{\theta^{-1}(i-1/2)+3/2}\cdot\cdots\cdot q_{\theta^{-1}(i+1/2)-1/2} &  (\theta^{-1}(i-1/2)<\theta^{-1}(i+1/2)),\\
q_{\theta^{-1}(i-1/2)-1/2}^{-1}\cdot q_{\theta^{-1}(i-1/2)-3/2}^{-1}\cdot\cdots\cdot q_{\theta^{-1}(i+1/2)+1/2}^{-1} & (\theta^{-1}(i-1/2)>\theta^{-1}(i+1/2)),
\end{cases}
\]
where we used $q_i$ for $i\in \mathbb{Z}$ by extending $q_0,\ldots, q_{L-1}$ periodically,
$$
q_{i+L}=q_i.
$$
Note that $q_i^{\theta=\textrm{id}}=q_i$.
The partition function is defined by
\begin{align}
Z_{\rm crystal}^{(L,\rho,\theta)}(q_0, q_1,\ldots, q_{L-1}):=\sum_{\Pi\in \mathcal{P}^{(L,\rho,\theta)}} (q^{\theta}_0)^{|\Pi|_0}
 (q^{\theta}_1)^{|\Pi|_1} \ldots (q^{\theta}_{L-1})^{|\Pi|_{L-1}}
\label{Zgeneral}
\end{align}

\begin{exa}
Let us take $L=1$, $\rho=+1$ and then the only choice for $\theta$ is
 $\theta={\rm id}$, and \eqref{generalevolution} reduces to
 \eqref{C3evolution}.
 This means $Z_{\rm crystal}^{(L=1,\rho=+1,\theta=\textrm{id})}=Z_{\rm crystal}^{\rm \mathbb{C}^3}$.
% \eqref{generalrevolution} reduces to \eqref{C3evolution}.
\end{exa}

\begin{exa}
Take $L=2$, $\rho(1/2)=+1$ and $\rho(3/2)=-1$. $\theta$ can in general written
 as 
$$\theta=\theta_n: ~ \frac{1}{2} \mapsto \frac{1}{2}-n,~ \frac{3}{2}\mapsto \frac{3}{2}+n.
$$
When $\theta=\theta_0$, \eqref{generalevolution} is the same as
 \eqref{conifoldevolution}, and the partition function \eqref{Zgeneral} coincides with \eqref{Zconifold}. % under the identification \eqref{conifoldparam}.
The case of $n\ne 0$ corresponds to generalized DT invariants in other chambers. The corresponding partition function is given by \cite{NN,JM,AOVY} (again under the identification 
\eqref{conifoldparam})
\begin{align}
Z_{\rm crystal}^{\rm conifold}(q_0,q_1;\theta_n)=M(q)^2 \prod_{k>0} (1+q^k
 Q)^k \prod_{k>n}(1+q^k Q^{-1})^k.
\label{ZBPS-n-conifold}
\end{align}
In particular, in the limit $n\to \infty$, this coincides with the
 commutative DT partition function for the conifold \cite{GV}. In this
 limit our statistical mechanical model reduces to the gluing of
 two crystal corners (topological vertices) as in \cite{ORV,AKMV}.
%different chambers in
% the conifold example.
%The general $n$ corresponds to a
% different chamber. We denote the corresponding partition function by
% $Z_{\rm crystal}(q,Q;n)$. The goal of this next section is to give a
% matrix model representation for this partition function.
\end{exa}

The general story goes as follows. We can construct from $(L,\rho)$ a toric Calabi-Yau manifold
$X^{(L,\rho)}$, which is one of the so-called generalized conifolds. 
This is a toric Calabi-Yau manifold without compact 4-cycles \footnote{See \cite{Aganagic} for discussion of Calabi-Yau geometries with a compact 4-cycle.}, and has
a connected
string of $L-1$
$\mathbb{P}^1$'s. Each $\mathbb{P}^1$ is either a
$\mathcal{O}(-2,0)$-curve or a $\mathcal{O}(-1,-1)$-curve, depending on
$\sigma(i-1/2)=\sigma(i+1/2)$ or $\sigma(i-1/2)=-\sigma(i+1/2)$. \footnote{This means that the overall sign change of $\rho$ does not change the geometry. In \eqref{generalevolution}, this has the effect of replacing $\pg,\pl$ by $\mg,\ml$. As discussed previously in the case of $\mathbb{C}^3$, this does not change the partition function, but will change the matrix model representation of the partition function presented in the next section.} We can
then consider the partition function of generalized DT invariants on
$X^{(L,\rho)}$. 

The remaining task is to specify the moduli dependence, which in this case is given by an element of the Weyl group of the affine Lie algebra $\hat{A}_{L-1}$ \cite{Nagao1}. The corresponding partition function is denoted by $Z_{\rm gDT}^{X^{\rho}}(q,Q;\theta)$.
Now the following theorem states that this
partition function is the same as the crystal partition function of type
$(L,\rho,\theta)$:
\begin{thm}[Nagao \cite{Nagao1}]
\begin{align}
 Z^{\sigma}_{\rm crystal}(q_0, \ldots, q_{L-1};\theta)=Z_{\rm gDT}^{X^{\rho}}(q,Q;\theta),
\label{generaleq}
\end{align}
\end{thm}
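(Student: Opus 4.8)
The plan is to evaluate the combinatorial side $Z^{\sigma}_{\rm crystal}$ by the transfer-matrix (free-fermion) method and then to match the resulting infinite product with the wall-crossing expression for $Z_{\rm gDT}^{X^{\rho}}$ in the chamber labelled by $\theta$. First I would realize each partition $\lambda(n)$ as a state $|\lambda(n)\rangle$ in the fermionic Fock space and encode the interlacing relations by half-vertex operators. Recall the four operators $\Gamma_\pm$ and their transposes $\Gamma'_\pm$, characterized by
\begin{align}
\Gamma_-(x)|\mu\rangle=\sum_{\lambda\succ\mu}x^{|\lambda|-|\mu|}|\lambda\rangle,\qquad
\Gamma'_-(x)|\mu\rangle=\sum_{\lambda\mg\mu}x^{|\lambda|-|\mu|}|\lambda\rangle,
\end{align}
together with the adjoint relations for $\Gamma_+,\Gamma'_+$ acting to the left. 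The point is that the sign $\sigma\circ\theta(n+1/2)$ together with the sign of $\theta(n+1/2)$ in \eqref{generalevolution} selects, at each step of the evolution, exactly one of these four operators. Inserting the grading operator that records $|\lambda(n)|$ with the weight prescribed by $q^\theta_i$, I would rewrite
\begin{align}
Z^{\sigma}_{\rm crystal}(q_0,\ldots,q_{L-1};\theta)=\langle\emptyset|\cdots A_{-1}A_0A_1\cdots|\emptyset\rangle,
\end{align}
a vacuum expectation value of a product of vertex operators $A_n\in\{\Gamma_\pm,\Gamma'_\pm\}$ with spectral parameters built from the $q_i$, finite once the stabilizing tails are truncated.

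The second step is the computation itself. I would normal-order the product, moving every annihilation-type operator ($\Gamma_+,\Gamma'_+$) to the right past every creation-type operator ($\Gamma_-,\Gamma'_-$). Each transposition produces a scalar: a same-type pair contributes a factor $(1-xy)^{-1}$ and a mixed pair contributes $(1+xy)$, where $x,y$ are the accumulated spectral parameters. Collecting these factors over all pairs in which an annihilation operator stands to the left of a creation operator yields an infinite product whose $(1-q^k)^{\pm}$-type factors arise from same-type commutators (the $\mathcal{O}(-2,0)$ steps, where $\sigma$ is unchanged) and whose $(1+q^k Q^{\pm1})^{\pm}$-type factors arise from mixed commutators (the $\mathcal{O}(-1,-1)$ steps, where $\sigma$ flips), with $q=q_0\cdots q_{L-1}$. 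The exponents and the precise set of factors that survive are dictated by the relative positions of the operators, which is exactly the data recorded by $\theta$.

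The third and final step is to identify this product with the generalized DT partition function in the chamber $\theta$. Geometrically $X^{(L,\rho)}$ is a generalized conifold whose BPS spectrum is organized by the positive roots of $\hat{A}_{L-1}$: the real roots correspond to D2-branes wrapping chains of $\mathbb{P}^1$'s and the imaginary roots to pure D0-brane charge. The Kontsevich-Soibelman/Joyce-Song wall-crossing formula writes $Z_{\rm gDT}^{X^{\rho}}(q,Q;\theta)$ as an ordered product over the positive roots that are ``active'' in the chamber selected by the affine Weyl element corresponding to $\theta$, with the same two families of factors. Matching the two products root by root, using the identification $q=q_0q_1,\ Q=q_1$ of \eqref{conifoldparam} in the conifold case and its evident generalization, completes the argument.

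The main obstacle is the bookkeeping in this last matching: one must verify that the parametrization by the bijection $\theta$ reproduces precisely the affine Weyl group action on the chamber decomposition of the DT moduli space, so that the set of surviving normal-ordering factors coincides with the set of active positive roots, with matching signs and exponents. In particular the half-integer shift conventions defining $q^\theta_i$ must be checked to intertwine the two descriptions, and the caveat that $\theta$ parametrizes only half of the chambers — the complementary half giving finite products — has to be accommodated by allowing the vertex-operator product to terminate. Once this dictionary between $\theta$, the root system, and the two types of commutation factors is pinned down, the equality \eqref{generaleq} follows.
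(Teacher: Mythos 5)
First, a point of comparison: the paper itself contains no proof of this theorem --- it is a survey statement quoted from Nagao \cite{Nagao1} --- so your attempt can only be judged against how the result is actually established in the cited literature. There your two-track strategy is recognizable: the vertex-operator computation of $Z^{\sigma}_{\rm crystal}$ (realizing each interlacing step as one of $\Gamma_\pm,\Gamma'_\pm$ and normal-ordering) is the method of \cite{YoungBryan,NagaoVO,Sulkowski}, and the combinatorial half of your sketch is sound: the same-type commutator $(1-xy)^{-1}$ and mixed-type commutator $(1+xy)$ do generate the two families of factors, correlated with whether $\sigma$ is constant or flips across a step, i.e.\ with $\mathcal{O}(-2,0)$ versus $\mathcal{O}(-1,-1)$ curves.

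The genuine gap is in your third step. You treat the assertion that $Z_{\rm gDT}^{X^{\rho}}(q,Q;\theta)$ is an ordered product over the positive roots of $\hat{A}_{L-1}$ that are ``active'' in the chamber $\theta$ as something the Kontsevich--Soibelman/Joyce--Song formula \cite{KS,JS} simply hands you. It does not: the wall-crossing formula only relates invariants across walls, so to obtain an actual product formula one needs (i) an anchor chamber in which the invariants are computed directly --- in \cite{NN,Nagao1} this is done geometrically via moduli of perverse coherent systems and torus localization --- and (ii) the identification of the chamber structure of the stability space with the affine Weyl chambers of $\hat{A}_{L-1}$. Both of these together constitute the geometric content of Nagao's theorem, so assuming the root-product formula for the gDT side and then ``matching root by root'' is essentially circular: the matching \emph{is} the theorem. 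Moreover, the dictionary between the bijection $\theta$ (with its half-integer conventions and the weights $q^{\theta}_i$) and the corresponding affine Weyl element, including the signs $q=\pm q_0\cdots q_{L-1}$, $Q_i=\pm q_i$ determined by $\rho$, is exactly the bookkeeping you explicitly defer at the end. As written, both the geometric half and this dictionary are left unestablished, so the proposal is a credible plan of attack rather than a proof.
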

where the parameter identifications are given by \footnote{The signs are
determined from $\rho$. See \cite{MR} and section 3.5 of \cite{Yamazaki}.}
\begin{align*}
q=\pm q_0\ldots q_{L-1}, ~~Q_i=\pm q_i ~~(i=1,\ldots, L-1).
%\label{generalparameteridentification}
\end{align*}

\begin{rem}
We can consider further generalizations, by changing the boundary conditions at infinity. This generalized model counts ``open generalized Donaldson-Thomas invariants''. See \cite{NagaoVO,NY,AY,Sulkowski2}.
\end{rem}

In the following we concentrate on the case of $\mathbb{C}^3$ and the resolved conifold.

%%%%%%%%%%%%%%%%%%%%%%%%%%%%%%%%%%%%%%%%%%%%%%%%%%%%%%%%%%%

\section{Matrix Model}

In the following sections we show that the crystal melting partition function defined in the previous section can be written as a unitary matrix integral:
\begin{thm}
\begin{align}
Z_{\rm crystal}^{\mathbb{C}^3}(q)=\lim_{N\to \infty}\int_{U(N)} dU \det \Theta(U|q),
\label{C3matrix}
\end{align}
where $dU$ is the Haar measure of the unitary group and
\begin{align} 
\Theta(u| q) = \prod_{k=0}^\infty (1 + u q^k)(1+u^{-1}q^{k+1}).
\end{align}
%is a product of two quantum dilogarithm functions.
\label{thm3.1}
\end{thm}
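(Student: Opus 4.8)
The plan is to turn the unitary integral into a sum over partitions by Schur/character orthogonality, and then to recognise MacMahon's product through the Cauchy identity, so that the known formula $Z_{\rm crystal}^{\mathbb{C}^3}(q)=\prod_{k>0}(1-q^k)^{-k}$ finishes the job.

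First I would read $\det\Theta(U|q)=\prod_{a=1}^N\Theta(z_a|q)$, where $z_1,\dots,z_N$ are the eigenvalues of $U$, and split the symbol as $\Theta(u|q)=\Theta_+(u)\Theta_-(u)$ with $\Theta_+(u)=\prod_{k\ge 0}(1+uq^k)$ and $\Theta_-(u)=\prod_{k\ge 1}(1+u^{-1}q^k)$. Writing $x=(1,q,q^2,\dots)$ and $y=(q,q^2,\dots)$, I view $\prod_a\Theta_+(z_a)=\prod_{a,i}(1+z_a x_i)$ and $\prod_a\Theta_-(z_a)=\prod_{a,j}(1+z_a^{-1}y_j)$, so that the dual Cauchy identity expands each factor in Schur functions,
\[
\prod_a\Theta_+(z_a)=\sum_{\mu}s_\mu(z)\,s_{\mu^t}(x),\qquad
\prod_a\Theta_-(z_a)=\sum_{\nu}s_\nu(z^{-1})\,s_{\nu^t}(y),
\]
where $z=(z_1,\dots,z_N)$ and $s_\mu(z)=0$ unless $\ell(\mu)\le N$.

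Next I would integrate term by term. Because $s_\mu(z)=\chi_\mu(U)$ is the character of the irreducible $U(N)$–representation $\mu$, while on the unitary group $z_a^{-1}=\bar z_a$ gives $s_\nu(z^{-1})=\overline{\chi_\nu(U)}$, orthonormality of characters $\int_{U(N)}\chi_\mu(U)\overline{\chi_\nu(U)}\,dU=\delta_{\mu\nu}$ collapses the resulting double sum to a single one,
\[
\int_{U(N)}dU\,\det\Theta(U|q)=\sum_{\mu:\,\ell(\mu)\le N}s_{\mu^t}(x)\,s_{\mu^t}(y).
\]
Sending $N\to\infty$ removes the constraint $\ell(\mu)\le N$; after the bijective change of variable $\lambda=\mu^t$ the Cauchy identity gives
\[
\lim_{N\to\infty}\int_{U(N)}dU\,\det\Theta(U|q)=\sum_{\lambda}s_\lambda(x)s_\lambda(y)=\prod_{i,j\ge 1}\frac{1}{1-x_iy_j}=\prod_{i,j\ge 1}\frac{1}{1-q^{\,i+j-1}}.
\]
Grouping the factors according to $m=i+j-1$, which occurs with multiplicity $m$, turns this into $\prod_{m\ge 1}(1-q^m)^{-m}$, i.e. exactly $Z_{\rm crystal}^{\mathbb{C}^3}(q)$ by MacMahon's formula, which proves \eqref{C3matrix}.

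The step I expect to demand the most care is the passage $N\to\infty$ together with the term-by-term integration: one must justify interchanging limit, sum and integral and the vanishing of the truncation $\ell(\mu)\le N$. This is controlled by the $q$–grading, since $s_{\mu^t}(x)s_{\mu^t}(y)$ has lowest $q$–degree growing with $|\mu|$, so for each power $q^d$ only finitely many $\mu$ contribute and all of them satisfy $\ell(\mu)\le N$ once $N$ is large; hence the coefficients stabilise and every interchange is legitimate for $|q|<1$. By contrast, the more analytic route—writing the integral as a Toeplitz determinant with symbol $\Theta(u|q)$ and applying the strong Szeg\H{o} limit theorem—is obstructed by the simple zero of $\Theta$ at $u=-1$, a Fisher–Hartwig singularity; there the saving grace is that its exponent $\alpha^2-\beta^2$ vanishes so no power of $N$ survives and the Szeg\H{o} constant reproduces the same product, but the Schur–Cauchy argument sidesteps this subtlety entirely.
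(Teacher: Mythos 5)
Your proof is correct, but it takes a genuinely different route from the paper's. The paper never evaluates either side of \eqref{C3matrix}: it encodes a plane partition as $N$ non-intersecting paths $h_k(t)=\lambda_{N-k+1}(t)+k-1$ (a vicious walker model), applies the Lindstr\"om--Gessel--Viennot lemma (Theorem \ref{LGV}) to write the path sum as a Toeplitz determinant $\det_{i,j}(G_{i-j})$, reads off the symbol $f(z)=\prod_{n}(1+zq^{n})\prod_{n}(1+z^{-1}q^{n+1})=\Theta(z|q)$ from the single-path generating function \eqref{C3f}, and finally converts the Toeplitz determinant into the $U(N)$ integral by the Heine--Szeg\"{o} identity (Theorem \ref{Heine}). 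You instead evaluate the matrix integral in closed form: dual Cauchy expansions of the two halves of $\det\Theta(U|q)$ in Schur polynomials, character orthogonality on $U(N)$ to collapse the double sum, and the Cauchy identity as $N\to\infty$ to obtain $\prod_{m\ge 1}(1-q^{m})^{-m}$, after which you quote MacMahon's formula for $Z_{\rm crystal}^{\mathbb{C}^3}$. Both arguments are valid, and your handling of the delicate step is right: the $q$-grading bound (only $|\mu|\le d$ contributes to the coefficient of $q^{d}$, and all such $\mu$ satisfy $\ell(\mu)\le N$ once $N\ge d$) is exactly what justifies the term-by-term integration and the removal of the length constraint. The trade-off is that your proof establishes an equality of known values and so leans on MacMahon's product formula (classical, and stated as known in the paper, so there is no circularity), whereas the paper's argument is structural: it identifies the symbol directly from the combinatorics without knowing the answer, which is why the very same machinery extends to the conifold in arbitrary chambers (Theorem \ref{thm3.2}), where no simple product evaluation is presupposed. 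Your Schur-function route can be pushed there as well, but it then requires skew Schur functions or vertex operators, i.e., essentially the free-fermion derivation the paper mentions as its alternative. Your closing aside on Fisher--Hartwig is also consistent: the zero of $\Theta$ at $u=-1$ carries $\alpha=\beta=1/2$, so $\alpha^{2}-\beta^{2}=0$ and no anomalous power of $N$ appears; but, as you note, nothing in your argument depends on it.
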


\begin{thm}[Ooguri-Su{\l}kowski-Yamazaki \cite{OSY}]
\begin{align}
Z_{\rm crystal}^{\rm conifold}(q,Q;\theta_n)=C_n Z_{\rm matrix}^{\rm conifold}(q,Q;n),
\label{eq4.3}
\end{align}
where
\begin{align}
C_n=\prod_{k=1}^{n} \frac{1}{(1-q^k)^k} \prod_{k=n+1}^{\infty}
\left( \frac{1-Q^{-1}q^k}{1-q^k}\right)^n,
\label{Cnconifold}
\end{align}
and 
\begin{align}
Z_{\rm matrix}^{\rm conifold}(q,Q;n)=\lim_{N\to \infty}\int_{U(N)} dU \det \left(
 \frac{\Theta(U|q)}{\Theta(QU|q)}
\prod_{k=1}^n (1+Q^{-1}U^{-1} q^k)\right),
\label{conifoldmatrix}
\end{align}
where the measure $dU$ and the function $\Theta(u|q)$ are the same as in Theorem 3.1.
\label{thm3.2}
\end{thm}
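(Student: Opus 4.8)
The plan is to realize $Z_{\rm crystal}^{\rm conifold}(q,Q;\theta_n)$ as a vacuum expectation value of vertex operators on fermionic Fock space, following the Okounkov--Reshetikhin--Vafa transfer-matrix method already underlying Theorem~\ref{thm3.1}. Each interlacing relation $\pg,\pl$ (resp.\ $\mg,\ml$) appearing in the alternating evolution \eqref{conifoldevolution} is implemented by a half-vertex operator $\Gamma_\pm$ (resp.\ its transposed partner), and the $q_0,q_1$ weights, hence $q,Q$, are inserted through the grading operator. The modification $\theta=\theta_n$ amounts to shifting $n$ of the interlacing arrows, i.e.\ commuting $n$ vertex operators past the origin, so that $Z_{\rm crystal}^{\rm conifold}(q,Q;\theta_n)$ becomes the vacuum amplitude of an explicit, $n$-dependent ordered product of operators.

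Next I would pass from this operator expression to a determinant. Rather than normally ordering everything, which only reproduces the infinite-product formula \eqref{ZBPS-n-conifold}, I will use the equivalence with the vicious walker model announced in Section~4: the evolution of partitions is in bijection, via the Lindstr\"om--Gessel--Viennot correspondence, with a system of $N$ non-intersecting lattice paths, and the weighted count of such configurations equals the $N\times N$ determinant $\det[G(u_i,v_j)]$ of the single-walker generating function $G$. The alternating $\mathcal{O}(-1,-1)$ structure of the conifold produces, for each walker, precisely the symbol $\Theta(u|q)/\Theta(Qu|q)$, while the $n$ shifted arrows contribute the extra factor $\prod_{k=1}^n(1+Q^{-1}u^{-1}q^k)$.

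Then I would lift the determinant to the unitary integral by the Heine--Szeg\H{o} (Andr\'eief) identity, which equates a Toeplitz-type determinant built from the Laurent coefficients of a symbol $f$ with $\int_{U(N)} \det f(U)\,dU$. Taking $f(u)=\Theta(u|q)\,\Theta(Qu|q)^{-1}\prod_{k=1}^n(1+Q^{-1}u^{-1}q^k)$ reproduces the integrand of \eqref{conifoldmatrix}. Finally I would let $N\to\infty$; comparing the finite-$N$ determinant with the crystal sum at fixed size isolates a size-independent normalization, which I would compute in closed form and identify with $C_n$ of \eqref{Cnconifold}.

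The hard part will be the bookkeeping of the $\theta_n$ shift: one must check that moving $n$ arrows across the origin produces exactly the integrand factor $\prod_{k=1}^n(1+Q^{-1}U^{-1}q^k)$ and, simultaneously, the prefactor $C_n$ with its two-part structure (the $\prod_{k=1}^n(1-q^k)^{-k}$ piece and the $\prod_{k>n}$ piece), rather than a discrepancy that only disappears in the limit. Controlling the $N\to\infty$ limit --- the convergence of the Toeplitz determinants and the commutation of this limit with the extraction of $C_n$ --- is the accompanying analytic subtlety; a strong Szeg\H{o} limit theorem argument, or a direct term-by-term comparison with the product \eqref{ZBPS-n-conifold}, should settle it.
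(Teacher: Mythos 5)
Your proposal follows essentially the same route as the paper's derivation: map the crystal evolution to a vicious-walker (non-intersecting path) ensemble, convert the weighted path count into a determinant via the Lindstr\"om--Gessel--Viennot theorem, and lift that Toeplitz determinant to the $U(N)$ integral by the Heine--Szeg\H{o} identity, with the vertex-operator formalism as the companion derivation. Note that the survey itself carries this out explicitly only for $\theta=\mathrm{id}$ (i.e.\ $n=0$, where $C_0=1$) and defers the general-$n$ bookkeeping --- precisely the step you flag as the hard part, namely how the shifted arrows yield both the factor $\prod_{k=1}^n(1+Q^{-1}U^{-1}q^k)$ in the symbol and the prefactor $C_n$ --- to \cite{OSY}.
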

%The integer $N$ is introduced as a regulator, and we take a limit $N\to \infty$.
We shall give derivations of these results in the next section, but
before going there some comments are in order.

\begin{rem}
Theorem \ref{thm3.1} and Theorem \ref{thm3.2} for $n=0$ are not new, although $n\ne 0$ case has not previously appeared in the literature as far as the author is aware of. Theorem 3.1 seems to be well-known in the literature, and 
can be considered as a reduction of a multi-matrix model in \cite{EynardTASEP}.
 For Theorem 3.2 with $n=0$, see the paper \cite{BaikRains}, which also proves similar identities for other groups. A vicious walker model similar to the one for the conifold presented in the next section was also discussed in \cite{Hikami}. In their terminology $\pg, \pl$ ($\mg, \ml$) are called
 ``normal'' (``super'') time evolution, and the model is analyzed by
%RHS of Theorem \ref{thm3.2}
%and the the expression \eqref{conifoldf} 
identifies involving semi-standard Young tableaux and hook
 Schur functions (also called supersymmetric Schur functions).
 They also analyze the scaling limit of the model. See also \cite{Szabo}.
\end{rem}

\begin{rem}
The prefactor $C_n$ simplifies in the limit $n=0$ and $n\to\infty$; $C_0=1$ and $C_{\infty}=M(q)$. In particular in these cases $C_n$ is independent of $Q$.
\end{rem}

\begin{rem}
$Z_{\rm matrix}$, being a partition function of a unitary matrix
 integral, is a reduction of a $\tau$-function of a two-dimensional
 integrable Toda chain \cite{Takasaki} (see e.g. \cite{Kharchev}). Similar integrable structures appeared in topological strings context in \cite{ADKMV}. See also \cite{GMN2,CNVTBA} for the appearance of thermodynamic Bethe Ansatz equations in the study generalized DT invariants.

%Toda chain
%$$
%n^2 \tau_{n+1}\tau_{n-1}=\tau_n \tau_n^{''}-(\tau_n^{'})^2.
%$$
%general chamber: a reduction of 2-dimensional Toda hierarchy.
\end{rem}

%\begin{rem}

Finally, let us discuss the thermodynamic limit of our model, using the
matrix integral given above. 
The thermodynamic limit is the limit
$g_s\to 0$, where the string coupling constant $g_s$ is related to the
parameter $q$ by $q=e^{-g_s}$ \footnote{This is the parameter counting
the size $|\Pi|:=\sum_{i=1}^{L-1} |\Pi|_i$ of a generalized plane partition $\Pi$.}.
%
%We treat both $\bC^3$ and the Since the chemical potential for $|\Pi|$
%is $q$ (see \eqref{generalparameteridentification}), 
%identification of $q=e^{-g_s}$, where $g_s$ is the string coupling constant. 
%
For small $g_s$, the modular transformation of $\Theta$
with respect to $g_s$ gives
\begin{align*}
 \Theta(e^{i\phi}| e^{-g_s}) = 
e^{-\frac{\phi^2}{2g_s}}\cdot \left(1  +  O(e^{-\frac{1}{g_s}})\right).   
\end{align*}
If we ignore non-perturbative terms in $g_s$, this means that 
the matrix model $Z_{\rm matrix}^{\mathbb{C}^3}$ reduces to
 a Hermitian Gaussian matrix model with unitary measure. 
This result was originally
 derived from the Chern-Simons theory on the conifold \cite{mm-lens,CSmatrix} (see also \cite{Okuda} for crystal melting description).

The spectral curve for this Gaussian matrix model 
is given by the equation \cite{CSmatrix,Marino}
\begin{align}
  e^x + e^y + e^{x-y - T} + 1 = 0, 
\end{align}
where $T=Ng_s$ is the 't Hooft coupling. This is the mirror of the resolved conifold. 
In the limit of $T\rightarrow \infty$ (which we should take since $N\to \infty$), the curve reduces to
\begin{align}
  e^x + e^y + 1 = 0, 
\end{align}
which is the mirror of $\mathbb{C}^3$. 

Next we discuss the spectral curve for the conifold matrix model \eqref{conifoldmatrix} \footnote{This is the spectral curve for the matrix model. Our statistical model can equivalently be written as a dimer model, which has its own version of the spectral curve. See \cite{KOS} and \cite{OY2}.}.
As before, we take the limit $g_s\to 0, N\to \infty$ with $T:=N g_s$
fixed, but now we also take $n\to \infty$, with $\tau:=n g_s$ fixed.
The spectral curve is given by \cite{OSY}
\begin{align}
e^{x+y}+e^x +e^y +Q_1 e^{2x}+ Q_2 e^{2y}+Q_3=0,
\label{conifoldcurve}
\end{align}
where
\begin{align}
\begin{split}
& Q_1 = \epsilon^2 \cdot 
\frac{1+\mu Q}{(1+\mu \epsilon^2)(1+Q\epsilon^2)}, \\
& Q_2 = \mu \cdot
 \frac{1+Q\epsilon^2}{(1+\mu Q)(1+\mu\epsilon^2 )}, \\
& Q_3 = Q \cdot \frac{1+\mu \epsilon^2}{(1+\epsilon^2 Q)(1+\mu Q)},
\label{Q123}
\end{split}
\end{align}
and $\mu= Q^{-1} q^n, \epsilon^2=e^{-T}$.
This is the mirror \cite{HoriVafa} of the so-called closed vertex geometry, 
whose web diagram is 
shown in Figure \ref{closedvertex}.

\begin{figure}[htbp]
\centering{\includegraphics[scale=0.25]{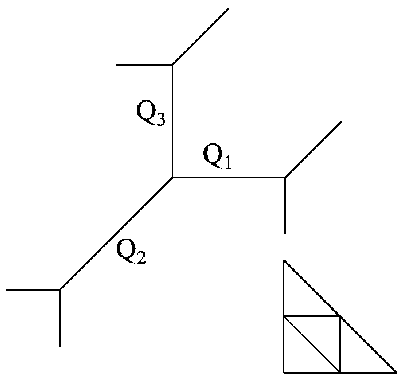}}
\caption{The web diagram for the close vertex geometry. Three 
$\mathbb{P}$'s with size $Q_1, Q_2, Q_3$ appear symmetrically.}
\label{closedvertex}
\end{figure}

There are two interesting observations on this result.
First, \eqref{Q123} coincides with the mirror map for this geometry.
Second, the curve (\ref{conifoldcurve}) 
is symmetric under exchanges of $Q$, $\mu=Q^{-1}q^n$ and $\epsilon^2 
= e^{-T}$. Namely, (1) the original K\"ahler moduli $Q$ 
of the resolved conifold, (2) the chamber parameter $n$ and (3) the 't Hooft
parameter $T$ appear symmetrically in the spectral curve. 
This is an interesting result, which suggests a possible connection between continuum limit of the wall
 crossing formulas and the BCOV
 holomorphic anomaly equation \cite{BCOV}.

In the matrix model we are interested in the limit of $N\to
\infty$, which means $T \rightarrow \infty$ or equivalently $\epsilon \rightarrow 0$.
With appropriate shifts of $x$ and $y$, the equation 
(\ref{conifoldcurve}) in this limit becomes
\begin{align}
  \mu\ e^{2y} + e^{x+y} + e^x + (1 + Q\mu)\ e^y + Q = 0.
\label{SPPcurve}
\end{align}
The is the mirror of the so-called Suspended Pinched Point (SPP) geometry,
with $Q$ and $\mu$ being exponentials of 
flat coordinates representing sizes of its two $\mathbb{P}^1$'s, 
which encode two copies of the initial $\mathcal{O}(-1,-1)\to\mathbb{P}^1$ geometry, see figure \ref{SPP}. 
Not only does the spectral curve agree
 with the mirror curve of the SPP geometry in the 
limit of $g_s \rightarrow 0$, but in fact
the matrix integral reproduces the full topological string 
partition function all orders in $g_s$ expansion.
Indeed, it is known that the SPP topological string partition function \footnote{In our context, a topological string partition function is a generalized DT partition function in the topological string chamber, where generalized DT invariants coincide with the original DT invariants of \cite{DT,Thomas}. This chamber is the analogue of $\theta_{n=\infty}$ in the conifold},
with K{\"a}hler parameters $Q$ and $\mu$, is equal to
\begin{align}
 Z^{\rm SPP}_{{\rm top}}(q, Q, \mu) 
= \prod_{k=1}^\infty \frac{(1-Qq^k)^k(1-\mu q^k)^k}
{(1-q^k)^{3k/2}(1 - \mu Q q^k)^k}.     \label{ZtopSPP}
\end{align}
On the other hand, from the explicit structure of the BPS generating
function and formulas (\ref{ZBPS-n-conifold}), (\ref{eq4.3}), (\ref{Cnconifold}) and (\ref{conifoldmatrix}),
we find that the value of the matrix integral, in the $N \rightarrow \infty$ limit, 
is related to the above topological string partition function as
\begin{align}   
  Z_{{\rm matrix}}(q, Q; n)  =  Z^{\rm SPP}_{{\rm top}}(q, Q, \mu = Q^{-1}q^n)
  \cdot \prod_{k=1}^\infty (1-q^k)^{k/2}.
 %& = & \prod_{k=1}^\infty (1-q^k)^k \cdot \prod_{j=1}^{\infty} \frac{(1-Qq^j)^j(1-\mu q^j)^j}{(1-\mu Q q^j)^j }
 \label{uptomacmahon}
\end{align}
This result is consistent with the philosophy of the remodeling
conjecture \cite{BKMP}, which states that a set of invariants (symplectic invariants) \cite{EO} constructed recursively from the spectral curve coincide with topological string partition function on the same geometry. Since symplectic invariants are defined by rewriting loop equations of matrix models purely in the language of spectral curves, the fact that the topological string partition function can be written as  a matrix model would prove the remodeling conjecture.
Indeed, this type of logic was used in \cite{EynardTASEP,EynardTopological} to prove the
 remodeling conjecture for toric Calabi-Yau manifolds.
%, although the results in these papers are in the topological string chamber and 
It would be interesting to know whether similar recursion relations exist 
in other chambers.

\begin{figure}[htbp]
\centering{\includegraphics[trim=0 0 0 0,scale=0.25]{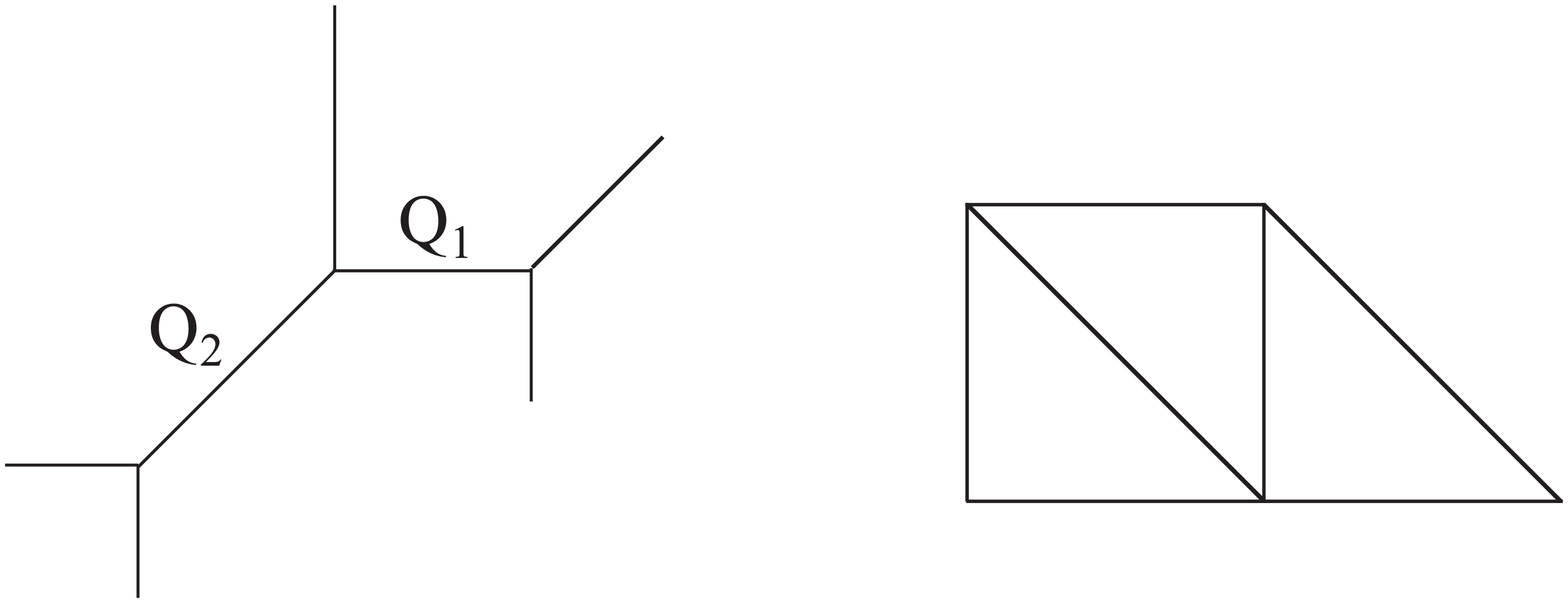}}
\caption{The web diagram for the SPP geometry. This geometry has two $\mathbb{P}^1$'s with size $Q_1$ and $Q_2$.}
\label{SPP}
\end{figure}

%\begin{rem}
%Connection with open topological strings
%\end{rem}

%%%%%%%%%%%%%%%%%%%%%%%%%%%%%%%%%%%%%%%%%%%%%%%%%%%%%%%%
\section{Derivation of the Matrix Models}

There are at least two derivations of the above-mentioned matrix models, one using the vertex operator formalism for free fermions \cite{JimboMiwa}
(see \cite{YoungBryan,NagaoVO, Sulkowski} for discussion our context) 
and another using the equivalence
with a vicious walker model (non-intersecting paths). %\cite{Fischer}. 
Both are presented in \cite{OSY}.
Here we comment on the 
latter method in the case $\theta={\rm id}$. 
The derivation here is a slightly simplified version of the 
derivation in \cite{OSY}. See also \cite{EynardTASEP,Szabo,EynardTopological}, 
which constructs similar matrix models in a particular chamber. In particular \cite{EynardTopological} treats arbitrary toric Calabi-Yau geometries.

Let us fix a sufficiently large number $N$. This is the same $N$ for the
 size of the unitary matrix in the previous section, and in the end we take the limit $N\to \infty$. 
Define 
\begin{align}
 h_k(t):=\lambda_{N-k+1}(t)+k-1, ~~ (k=1,\ldots, N).
\label{hkdef}
\end{align}
%Since each $\lambda_k(t)$ is a partition, we have
%%\begin{itemize}
%%\item 
%\begin{align}
% h_k(t)< h_{k+1}(t) ~~{\rm  for ~~ all}~~ t,
%\label{C3nonintersecting}
%\end{align}
%and since $\lambda(t)=\{0\}$ for $|t|$ sufficiently large,
%%\item 
%\begin{align}
%h_k(t)=k-1 ~~{\rm for}~~ |t|~~ {\rm large}.
%\label{C3boundary}
%\end{align}
%The interlacing condition \eqref{C3evolution} is translated into
%%\item 
%\begin{equation*}
%h_k(t+1)-h_k(t)= 0 ~{\rm or}\, -1,
%\end{equation*}
%for $t \geq 0$ and 
%\begin{equation*}
%h_k(t+1)-h_k(t)= 0 ~{\rm or}~ 1,
%\label{Cpm}
%\end{equation*}
%for $t < 0$. 
%\end{itemize}
%
Since $\lambda(t)$ is a partition, we have
\begin{align}
h_k(t)< h_{k+1}(t),
\label{C3nonintersecting}
\end{align}
for all $t$. We also have the boundary condition,
\begin{align}
h_k(t)=k-1 ~\mathrm{when} ~|t| ~\textrm{large} .
\label{C3boundary}
\end{align}
Moreover, \eqref{C3evolution} means we have, for each step $t$,
\begin{equation}
h_k(t+1)-h_k(t)= 0 ~{\rm or}\, -1,
\label{pm1}
\end{equation}
for $t \geq 0$ and 
\begin{equation}
h_k(t+1)-h_k(t)= 0 ~{\rm or}~ 1,
\label{pm2}
\end{equation}
for $t < 0$.

Suppose that we fix a large positive (negative) integer $t_{\max}$ ($t_{\min}$). We are going to send these numbers of infinity. If we plot the value of 
$\{h_k(t)\}_{t=t_{\rm min}}^{t_{\rm max}}$ for each $k$, we have a set of $N$ 
paths. Due to the conditions \eqref{pm1}, \eqref{pm2} the paths move on the graph
shown in Figure \ref{C3path}, and \eqref{C3boundary} means we have a
fixed boundary condition. Finally, \eqref{C3nonintersecting} means that
 $N$ paths are non-intersecting. Summing up, we have a statistical mechanical model of non-intersecting paths (also called a vicious walker model \cite{Fisher} \footnote{We can also regards this model as a time evolution of $N$ particles in one dimension. In this language the model is an exclusion process, a variant of the ASEP \cite{ASEP}.}), whose partition function is given by:
\begin{align}
Z\simeq \sum_{ \{h_i(t)\}: \textrm{ non-intersecting paths on the graph}}\, \prod_t q^{\sum_i h_i(t)},
\end{align}
where $\simeq$ shows that we neglected an overall multiplicative constant.

\begin{figure}[htbp]
\centering{\includegraphics[scale=0.25]{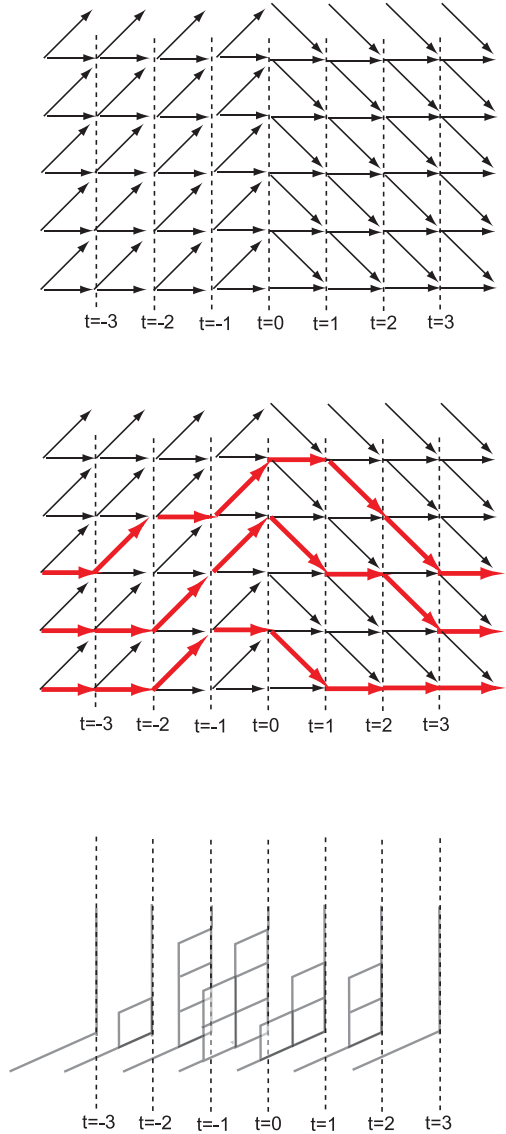}}
\caption{Top: an oriented graph for $\mathbb{C}^3$. Middle:
an example of 3 non-intersecting paths shown as bold (red) arrows. 
The location of the $k$-th path at time $t$ gives $h_k(t)$. Bottom: The corresponding evolution of Young diagrams.}
\label{C3path}
\end{figure}

At first sight it seems difficult 
in practice to implement the non-intersecting conditions for paths.
The following theorem states that we can write the sum over
non-intersecting paths as a determinant of a matrix, whose 
element is defined by a single path:

\begin{thm}[Lindstr{\"o}m \cite{Lindstrom}, Gessel-Viennot \cite{GesselV}; Karlin-McGregor \cite{KarlinMcgregor}]
Suppose we are given an oriented graph without oriented loops. Suppose
 moreover that each edge $e$ comes with a
 weight $w(e)$.
 For a path $p$ on the graph, we define $w(p)$ to be the product of the
 weighs for all the edges on
 the path:
$w(p):=\prod_{e\in p}w(e)$. We define $F$ by summing over all non-intersecting
paths $\{ p_i\}$ (each $p_i$ starts from  $a_i$ and ends at $b_i$): 
\begin{align}
F(\{a_i\},\{b_i\})=\sum_{ \{ p_i:\, a_i\to b_i\}: \textrm{ non-intersecting}} \prod_i w(p_i),
\end{align}
and an $N\times N$ matrix $G(a_i,b_j)$ by
\begin{align}
G(a_i,b_j)= \sum_{p \textrm{: a path from } a_i \textrm{ to } b_j} w(p).
%G(a_i, b_j):=\sum_{path: a_i\to b_j} w(p)
\end{align}
Then
\begin{align}
F(\{a_i\},\{b_i\})=\det_{i,j}\,(G(a_i,b_j)).
%F(\{a_i\},\{b_i\})=\det_{i,j} (G (a_i,b_j))
\end{align}
\label{LGV}
\end{thm}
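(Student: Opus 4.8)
The plan is to prove the Lindström–Gessel–Viennot lemma by the standard sign-reversing involution on pairs consisting of a permutation together with a family of paths. I will expand the determinant as a signed sum over permutations, interpret each term combinatorially as a family of paths realizing that permutation, and then construct an involution that cancels all the intersecting configurations in sign-reversing pairs, leaving exactly the non-intersecting families weighted by the identity permutation.

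First I would expand the determinant directly from its definition:
\begin{align*}
\det_{i,j}\bigl(G(a_i,b_j)\bigr)
=\sum_{\pi\in S_N}\operatorname{sgn}(\pi)\prod_{i=1}^N G\bigl(a_i,b_{\pi(i)}\bigr).
\end{align*}
Substituting the definition $G(a_i,b_j)=\sum_{p:\,a_i\to b_j}w(p)$ and distributing the product over the sum, each term becomes a choice of an $N$-tuple of paths $P=(p_1,\ldots,p_N)$ where $p_i$ runs from $a_i$ to $b_{\pi(i)}$, weighted by $\operatorname{sgn}(\pi)\prod_i w(p_i)$. Thus
\begin{align*}
\det_{i,j}\bigl(G(a_i,b_j)\bigr)
=\sum_{(\pi,P)}\operatorname{sgn}(\pi)\prod_{i=1}^N w(p_i),
\end{align*}
the sum ranging over all pairs $(\pi,P)$ of a permutation and a compatible path family. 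I would split this sum into the \emph{intersecting} families (some two paths share a vertex) and the \emph{non-intersecting} ones.

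Next I would define the sign-reversing involution on the intersecting families. Given an intersecting family $(\pi,P)$, fix the smallest index $i$ such that $p_i$ meets some other path, let $v$ be the first vertex along $p_i$ (in its orientation) at which a collision occurs, and among the paths passing through $v$ pick the one with smallest index $j>i$ (say). Now swap the tails of $p_i$ and $p_j$ beyond $v$: define new paths $p_i'$ going $a_i\to v\to b_{\pi(j)}$ and $p_j'$ going $a_j\to v\to b_{\pi(i)}$, leaving all other paths fixed. This produces a new family $P'$ realizing the permutation $\pi'=\pi\circ(i\,j)$. The key points to verify are that the rule for choosing $(v,i,j)$ is canonical, so that applying the construction twice returns the original family (it is an involution), that the total path weight $\prod_i w(p_i)$ is preserved by the tail swap since the multiset of edges is unchanged, and that $\operatorname{sgn}(\pi')=-\operatorname{sgn}(\pi)$ because $\pi'$ differs from $\pi$ by a transposition. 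Hence the contributions of $(\pi,P)$ and $(\pi',P')$ cancel, and all intersecting terms vanish from the sum.

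Finally, for the surviving non-intersecting families, I would argue that the permutation must be the identity. Because the graph has no oriented loops and the sources $a_i$ and sinks $b_i$ are ordered compatibly, any non-identity $\pi$ would force two of its paths to cross, contradicting non-intersection; this is the place where the hypotheses on the ordering of endpoints enter, and it is worth stating as a short lemma. What remains is precisely $\sum_{P\text{ non-int.}}\operatorname{sgn}(\mathrm{id})\prod_i w(p_i)=F(\{a_i\},\{b_i\})$, giving the claimed identity. \emph{The main obstacle} is making the tail-swap construction genuinely involutive: one must choose the collision vertex and the pair of indices by a rule that is intrinsic to the family and unchanged under the swap, so that the first collision detected in $P'$ recovers exactly the same $(v,i,j)$ and inverts the operation. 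Care is also needed when three or more paths pass through a single vertex, so the tie-breaking among indices must be specified precisely.
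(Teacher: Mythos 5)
Your proposal is correct and is essentially the same argument as the paper's proof, which sketches exactly this tail-swap cancellation (including, in a footnote, your tie-breaking rule for when more than two paths meet at a vertex). Your write-up simply fills in details the paper leaves implicit, notably the permutation expansion of the determinant and the observation that compatible ordering of sources and sinks forces the surviving non-intersecting families to realize the identity permutation.
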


\begin{proof}
When we expand the determinant $\det_{i,j}\, (G(a_i,b_j))$, we 
have contributions from non-intersecting as well as intersecting paths.
However, contributions from the latter cancel out because
they always come in pairs with an opposite sign (Figure \ref{LGVproof}) \footnote{When more than two paths intersect at a single point, we need to pick two of them according to a fixed ordering and apply the same argument.}.
\end{proof}

\begin{figure}[htbp]
\centering{\includegraphics[scale=0.4]{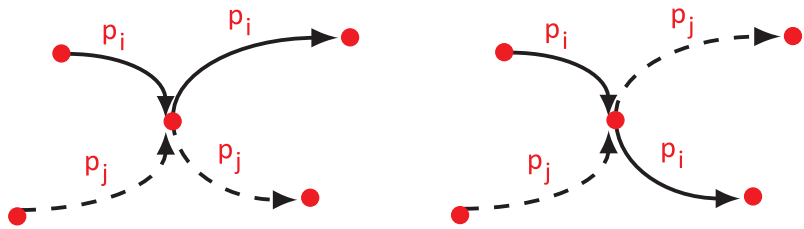}}
\caption{When we expand $\det G$, intersecting paths 
always come in pairs with an opposite sign. The reason is that
 we can exchange the label for paths after 
the intersection, without changing the paths themselves.}
\label{LGVproof}
\end{figure}

\begin{exa}
Consider an oriented graph with weights $w_1,\ldots, w_6$ as shown in Figure \ref{LGVeg}.
It is easy to see that
\begin{align}
F(\{a_1,a_2\},\{b_1,b_2\})=(w_1 w_4)(w_3 w_5),
\end{align}
and
\begin{align}
G(a_i,b_j)=
\left(
\begin{array}{cc}
 w_1 w_4 ~~& w_1 w_6 \\
 w_2 w_4 ~~& w_2 w_6+ w_3 w_5
\end{array}
\right).
\end{align}
We indeed have $\det G=w_1 w_4 (w_2 w_6+w_3 w_5)-(w_2 w_4)(w_1 w_6)=F$.
\begin{figure}[htbp]
\centering{\includegraphics[scale=0.5]{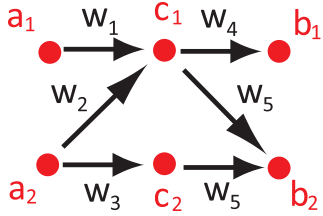}}
\caption{An example of an oriented graph.}
\label{LGVeg}
\end{figure}
\end{exa}

\begin{rem}
The fact that paths are non-intersecting is a manifestation of 
free fermions, and the determinant is interpreted as 
a Slater determinant.
\end{rem}

Therefore we have
\begin{align}
 Z_{\rm crystal}^{\mathbb{C}^3}(q)=\det_{i,j}\, (G_{i,j}(q)),
\end{align}
where $G_{i,j}(q)$ is defined as a weighted sum over all possible paths 
which start at height $i$ at $t=t_{\rm min}$ and end at height $j$ at
$t=t_{\rm max}$. 
As we can see from Figure \ref{C3path},
$G(a_i, b_j )$ depend only on the difference $i-j$,
and we thus have
\begin{align}
Z_{\rm crystal}^{\mathbb{C}^3}(q)= \det_{i,j}\, (G_{i-j}(q)).
\end{align}

%\begin{figure}[htbp]
%i-j
%\caption{i-j}
%\label{i-j}
%\end{figure}

It turns out to be easier
 to write a generating function for $G_n$
than to write each  $G_n$ separately:
\begin{align}
f(z):=\sum_n G_n z^n=\prod_n (1+z q^n ) \prod_n (1+z^{-1} q^{n+1}).
\label{C3f}
\end{align}
\begin{proof}
To see this, note that a term in the expansion of the product is in
one-to-one correspondence with a path. For example, for $t<0$ we 
take either $1$ or $z q^{t}$
from the product, and the choice corresponds to the 
two possibilities in \eqref{pm2}.
The change of the horizontal
coordinates is measured by $z$, and taking the coefficient in front 
of $z^n$ means summing paths with height change $n$.
The product in \eqref{C3f} is over all non-negative integers $n$ when we send $t_{\rm min}\to -\infty, t_{\rm max}\to \infty$.
\end{proof}

Finally, we have the following theorem:
\begin{thm}[Heine \cite{Heine}, Szeg\"{o} \cite{Szego}]
Suppose that $f(z)=\sum_n G_n z^n$. We then have 
\begin{equation}
\int_{U(N)} dU \det f(U)=\det_{1\le i,j \le N}\, G_{i-j}
\end{equation}
\label{Heine}
\end{thm}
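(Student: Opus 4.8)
The plan is to evaluate the unitary integral by passing to eigenvalues and then reorganizing the resulting permutation sums into a single Toeplitz determinant. First I would invoke the Weyl integration formula: since $\det f(U)=\prod_{j=1}^N f(z_j)$ is a class function of the eigenvalues $z_j=e^{i\theta_j}$ of $U$, one has
$$\int_{U(N)} dU \det f(U)=\frac{1}{N!}\int_{[0,2\pi]^N}\prod_{j=1}^N\frac{d\theta_j}{2\pi}\,\bigl|\Delta(z)\bigr|^2\prod_{j=1}^N f(z_j),$$
where $\Delta(z)=\prod_{1\le j<k\le N}(z_k-z_j)=\det_{j,k}(z_j^{k-1})$ is the Vandermonde determinant. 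The purpose of this step is to trade the group integral for a product measure on the torus, on which the Laurent modes of $f$ become accessible.

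Next I would expand $|\Delta(z)|^2=\Delta(z)\overline{\Delta(z)}$ using $\overline{z_j}=z_j^{-1}$, writing each factor as a sum over permutations, $\Delta(z)=\sum_{\sigma}\mathrm{sgn}(\sigma)\prod_j z_j^{\sigma(j)-1}$ and $\overline{\Delta(z)}=\sum_\tau \mathrm{sgn}(\tau)\prod_j z_j^{-(\tau(j)-1)}$. Because the measure factorizes over $j$, the integral splits into a product of single-variable integrals, each computed by orthogonality on the circle: $\int_0^{2\pi}\frac{d\theta}{2\pi}\,e^{im\theta}f(e^{i\theta})=G_{-m}$, directly from $f(z)=\sum_n G_n z^n$. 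Since $z_j^{\sigma(j)-1}z_j^{-(\tau(j)-1)}=z_j^{\sigma(j)-\tau(j)}$, this produces
$$\int_{U(N)} dU \det f(U)=\frac{1}{N!}\sum_{\sigma,\tau}\mathrm{sgn}(\sigma)\,\mathrm{sgn}(\tau)\prod_{j=1}^N G_{\tau(j)-\sigma(j)}.$$

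The final step is a combinatorial collapse of the double sum. Setting $\pi=\tau\sigma^{-1}$ and reindexing the product by $i=\sigma(j)$, the summand becomes $\mathrm{sgn}(\pi)\prod_i G_{\pi(i)-i}$, which no longer depends on $\sigma$; using $\mathrm{sgn}(\sigma)=\mathrm{sgn}(\sigma^{-1})$, summing freely over $\sigma$ cancels the factor $1/N!$ and leaves $\sum_\pi \mathrm{sgn}(\pi)\prod_i G_{\pi(i)-i}$. Replacing $\pi$ by $\pi^{-1}$ and reindexing once more identifies this with $\det_{1\le i,j\le N}(G_{i-j})$, the claimed Toeplitz determinant. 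I expect the only genuine obstacle to be analytic rather than combinatorial: one must justify interchanging the series $\sum_n G_n z^n$ with the integrations, i.e. term-by-term integration of $f$ against the trigonometric kernel. In the present setting $f$ is the convergent product \eqref{C3f} with $|q|<1$, so its Laurent series converges uniformly on the unit circle and the interchange is legitimate; the remainder is the bookkeeping above.
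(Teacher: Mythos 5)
Your proposal is correct and follows essentially the same route as the paper's proof: diagonalize $U$ via the Weyl integration formula, write the Haar measure's $|\Delta|^2$ factor as (a product of) determinants, expand them as permutation sums, and integrate term by term using orthogonality of Fourier modes on the circle. You simply carry out in detail the permutation bookkeeping and the cancellation of the $1/N!$ that the paper leaves as ``we can easily carry out the integral,'' and your closing remark on justifying term-by-term integration (uniform convergence of the Laurent series of $f$ for $|q|<1$) is a sensible addition.
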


\begin{rem}
The RHS of the equation is often called a Toeplitz determinant of $f$.
\end{rem}

\begin{proof}
Diagonalize the unitary matrix $U$ to be $(e^{\sqrt{-1} \phi_1}, \ldots,
 ,e^{\sqrt{-1} \phi_N})$ . Then the integral $\int dU$ reduces to
 $\int \prod d\phi_i \det_{i,j} (e^{\sqrt{-1} i \phi_j }) \det_{i,j}
 (e^{-\sqrt{-1} i\phi_j})$, while the integrand becomes a product $\prod_i
 f(e^{\sqrt{-1}\phi_i})$. 
After expanding the two determinants using the definition of the
 determinant,
we can easily carry out the integral, and the result follows.
\end{proof}
This theorem, together with the form of $f(z)$ in \eqref{C3f}, completes 
the derivation of the matrix model for $\mathbb{C}^3$.

\bigskip
%------------ conifold --------------------------
The analysis for the conifold is 
essentially the same, so let us summarize the result
 briefly.
By defining $h_k(n)$ again as in \eqref{hkdef}, we again have
\eqref{C3nonintersecting} and \eqref{C3boundary}, 
except that \eqref{pm1}, \eqref{pm2} are going to be replaced by 
\begin{enumerate}
\item When $t$ is odd, 
\begin{align}
h_k(t+1)-h_k(t)=
\begin{cases}
0, ~ 1 & (t < 0), \\
0,  ~ -1 & (t \ge 0).
\end{cases}
\end{align}

\item When $t$ is even, 
\begin{align}
\ldots \leq h_{k-1}(t+1) <h_k(t)\leq h_k(t+1) < h_{k+1}(t) \leq \ldots .
\label{eq.hevolve1}
\end{align}
for $t< 0$ and 
\begin{align}
\ldots \leq h_{k-1}(t) <h_k(t+1)\leq h_k(t) < h_{k+1}(t+1) \leq \ldots .
\label{eq.hevolve2}
\end{align}
for $t\ge 0$.
\end{enumerate}
These conditions mean $\{ h_k(n)\}$ move on the graph shown in Figure
\ref{conifoldpath}. Note that the structure of the graph is 
different depending on whether $t$ is even or odd.

\begin{figure}[htbp]
\centering{\includegraphics[scale=0.25]{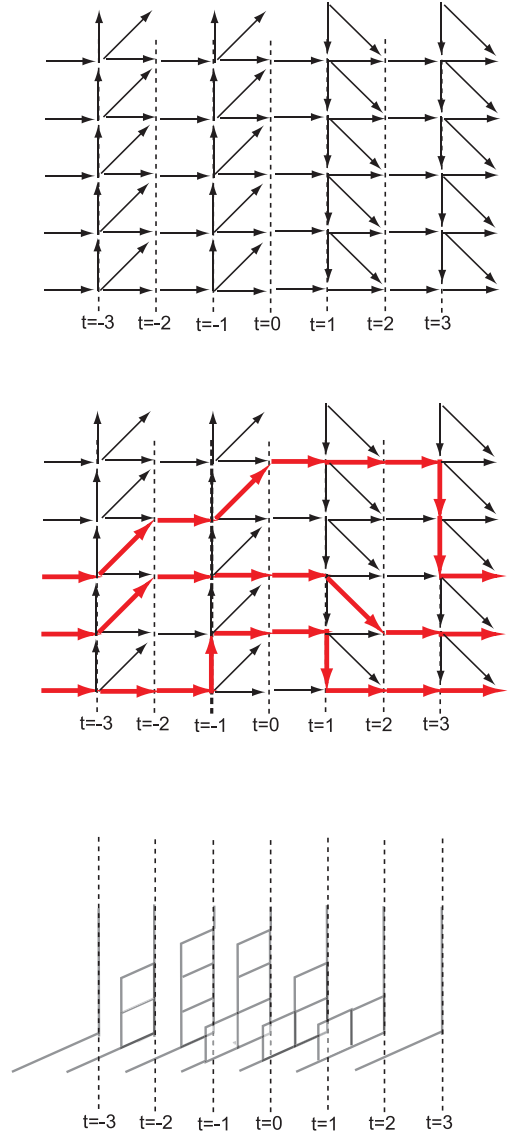}}
\caption{Top:
an oriented graph for the conifold.
Middle: an example of 3 non-intersecting paths on the graph shown
in red. Bottom: the corresponding evolution of Young diagrams.}
\label{fig.conifoldorientedgraph}
\label{conifoldpath}
\end{figure}

Again by using Theorems \ref{LGV} and \ref{Heine}, we have
\begin{align}
Z=\int dU \det f(U),
\end{align}
where
\begin{align}
f(U)= \prod_n \frac{1+(q_0 q_1)^n z}{1-(q_0 q_1)^n q_1 z} \prod_n
 \frac{1+(q_0 q_1)^{n+1} z^{-1}}{1-(q_0 q_1)^{n+1} q_1 z^{-1}}.
\label{conifoldf}
\end{align}
This is nothing but the expression \eqref{conifoldmatrix} for $n=0$.

\section*{Acknowledgments}

The author would like to express his gratitude to the organizers and the
participants of the workshop for creating a stimulating atmosphere. The material here was
presented also at a seminar at the IPMU, University of Tokyo, and the
 feedback from the audience there is greatly acknowledged. The author
 would also like to thank K.~Nagao, H.~Ooguri and P.~Su{\l}kowski for
 related collaboration and discussion. This work was supported in part
 by JSPS postdoctoral fellowships for young scientists, MEXT, Japan, and by Princeton Center for Theoretical Science.

\appendix

\section{Generalized Donaldson-Thomas Invariants}

%We begin with a geometrical setup. 
In this appendix we briefly summarize the ingredients of the generalized Donaldson-Thomas invariants. Our discussion in this section is
far from rigorous and at best schematic, since the main focus of this paper is more on combinatorial aspects presented in the main text. 
%Readers in hurry can proceed directly to next section and come back here when necessary.
%not interested in geometrical aspects of the story can safely skip this section.

For the definition of generalized DT invariants, we need the following:
\begin{itemize}
\item $X$: a Calabi-Yau 3-fold.

\item a ``charge lattice'': %$H_{\textrm{even}}(X;\mathbb{Z})$. 
\[
 H_{\textrm{even}}(X;\mathbb{Z})=H_0(X;\mathbb{Z})\oplus H_2(X;\mathbb{Z}) \oplus H_4(X;\mathbb{Z}) \oplus H_6(X;\mathbb{Z}).
\]

\item complexified K\"ahler moduli of $X$:
\[
 t_i=B_i+\sqrt{-1}\, k_i, ~~~ (i=1,\ldots, \textrm{dim}\,H_2(X;\mathbb{Z})),
\]
where the real part $B_i$ (imaginary part $k_i$) denotes the B-field
      flux through (the volume of) the $i$-th 2-cycle.

\item A central charge function $Z_{\gamma}(t)$, which depend on
      $t:=\{t_i\}$ and linearly on $\gamma\in
      H_{\textrm{even}}(X;\mathbb{Z})$ \footnote{This is part of the data
      for the stability conditions \cite{Bridgeland}.}.
\end{itemize}
With these data we can define ``generalized DT invariants'' 
\[
 \Omega(\gamma; t) \in \mathbb{Q}.
\]
For concreteness, in this paper we restrict ourselves to the following situations:
\begin{itemize}
\item $X$: a toric \footnote{Toric here means local toric, i.e. it is a
      canonical bundle over a complex 2-dimensional toric variety.}
      Calabi-Yau 3-fold without compact 4-cycles. For example, $X$ can
      be $\mathbb{C}^3$ or the (resolved) conifold. 

\item In the charge lattice $H_{\textrm{even}}(X;\mathbb{Z})$ we only consider the
      following set of charges \footnote{
%The meaning of cohomology is
      %subtle since $X$ is non-compact. 
Here $H_0$ and $H_2$ correspond
      to compact 0- and 2-cycles, respectively, and $H_6$ to $X$ itself,
      which is noncompact.}: 
\begin{align*}
\begin{split}
H_{\textrm{even}}(X;\mathbb{Z}) & = H_0(X;\mathbb{Z})\oplus H_2(X;\mathbb{Z})\oplus H_4 (X;\mathbb{Z})\oplus H_6 (X;\mathbb{Z})\\
%\rotatebox{90} & {$\in$} ~~  \rotatebox{90}{$\in$}\\
\gamma& =(~~n,~~ ~~~~~~~~ \beta=\{\beta_i\},~~~~~~~~~~ 0, ~~~~~~~~~~1~~).
\end{split}
\end{align*}
\end{itemize}
We then have a set of integer invariants
\begin{align*}
 \Omega\left(\gamma\!=\! (n,\beta,0,1);\,t\right) \in \mathbb{Z}.
\end{align*}
Instead of studying these invariants separately, it is useful to define
their generating function:
\begin{align}
 Z_{\rm gDT}(q,Q;t)=\sum_{n,\beta} \Omega\left(\gamma\!=\!(n,\beta,0,1);\,t\right)\, q^n Q^{\beta},
\end{align}
where $Q:=\{Q_i\}$ denotes a set of parameters and $Q^{\beta}:=\prod_i
Q_i ^{\beta_i}$. This is the partition function for generalized DT invariants studied in the main text.
% For other descriptions, see the references above.
%, although often call all three
%equivalent models by simply ``crytal melting'' \footnote{In the literature the word ``crystal melting'' often refers specifically to the
%counting of plane partitions as in \cite{ORV}. Our counting problem of
%crystal melting, however, is in general different from theirs (due to
% the wall crossing phenomena) and should
%be considered as a generalization of the old story.}. 

%%%%%%%%%%%%%%%%%%%%%%%%%%%%%%%%%%%%%%%%%%%%%%%%%%


\begin{thebibliography}{99}
%
% The \bibitem commands: 
% Please follow "Notice to Authors" for referencing.  You 
% must specify bold and italic fonts yourself. 

\bibitem{MacMahon}
  P.~A.~ MacMahon, 
  {\it Combinatory analysis}, Vol. 2, Cambridge University Press, 1916. 
  %vol1 in 1915, vol2 in 1916

\bibitem{Stanley}
  R.~P.~ Stanley, 
  ``Theory and application of plane partitions,'' Parts I and II, 
  Stud.\ Appl.\ Math.\ {\bf 50}, 167 and 259 (1971). %167-188, %259-279, (1971)

\bibitem{GV}
  R.~Gopakumar and C.~Vafa,
  ``M-theory and topological strings. I,''
  arXiv:hep-th/9809187;
  %%CITATION = HEP-TH/9809187;%%
  ``M-theory and topological strings. II,''
  arXiv:hep-th/9812127.
  %%CITATION = HEP-TH/9812127;%%

\bibitem{AKMV}
  M.~Aganagic, A.~Klemm, M.~Marino and C.~Vafa,
  ``The topological vertex,''
  Commun.\ Math.\ Phys.\  {\bf 254}, 425 (2005)
  [arXiv:hep-th/0305132].
  %%CITATION = CMPHA,254,425;%%

\bibitem{ORV}
  A.~Okounkov, N.~Reshetikhin and C.~Vafa,
  ``Quantum Calabi-Yau and classical crystals,''
  arXiv:hep-th/0309208.
  %%CITATION = HEP-TH/0309208;%%

\bibitem{INOV}
  A.~Iqbal, N.~Nekrasov, A.~Okounkov and C.~Vafa,
  ``Quantum foam and topological strings,''
  JHEP {\bf 0804}, 011 (2008)
  [arXiv:hep-th/0312022].
  %%CITATION = JHEPA,0804,011;%%

\bibitem{DT}
 S.~K.~Donaldson and R.~P.~Thomas, ``Gauge theory in higher dimensions,''
 in {\it The geometric universe: science, geometry and the work of Roger Penrose}, Oxford Univ. Press, 1998.

\bibitem{Thomas}
  R.~P.~Thomas, ``A holomorphic Casson invariant for Calabi-Yau 3-folds, 
and bundles on K3 fibrations,'' J.\ Diff.\ Geom. {\bf 54}, 367 (2000)
  [arXiv:math.AG/9806111].

\bibitem{MNOP1}
 D.~Maulik, N.~Nekrasov, A.~Okounkov and R.~Pandharipande,
 ``Gromov-Witten theory and Donaldson-Thomas theory. I,''
  Compos.\ Math.\ {\bf 142}, 1263 (2006)
  [arXiv:math.AG/0312059].

\bibitem{MOOP}
  D.~ Maulik, A.~ Oblomkov, A.~ Okounkov, R.~ Pandharipande,
  ``Gromov-Witten/Donaldson-Thomas correspondence for toric 3-folds,''
  arXiv:0809.3976 [math.AG]
  %%CITATION = CMPHA,264,227;%%

\bibitem{JS}
  D.~Joyce and Y.~Song, 
  ``A theory of generalized Donaldson-Thomas invariants,''
  arXiv:0810.5645 [math.AG], to appear in Memoirs of Amer.\ Math.\ Soc. 

\bibitem{KS}
  M. Kontsevich and Y. Soibelman, 
  ``Stability structures, motivic Donaldson-Thomas invariants and 
  cluster transformations,'' arXiv:0811.2435 [math.AG].

\bibitem{DenefM}
  F.~Denef, G.~W.~Moore,
  ``Split states, entropy enigmas, holes and halos,''
  arXiv:hep-th/070214.

\bibitem{Szendroi}
  B.~Szendr\"oi,
  ``Non-commutative Donaldson-Thomas theory and the conifold,''
  Geom.\ Topol.\  {\bf 12}, 1171 (2008)
  [arXiv:0705.3419 [math.AG]].
  %%CITATION = 00256,12,1171;%%

\bibitem{MR}
  S.~Mozgovoy and M.~Reineke,
  ``On the noncommutative Donaldson-Thomas invariants arising from 
  brane tilings,''  arXiv:0809.0117 [math.AG].

\bibitem{OY1}
  H.~Ooguri and M.~Yamazaki,
  ``Crystal Melting and Toric Calabi-Yau Manifolds,''
  Commun.\ Math.\ Phys.\  {\bf 292}, 179 (2009)
  [arXiv:0811.2801 [hep-th]].
  %%CITATION = CMPHA,292,179;%%

\bibitem{OSY}
  H.~Ooguri, P.~Su{\l}kowski and M.~Yamazaki,
  ``Wall Crossing As Seen By Matrix Models,''
  arXiv:1005.1293 [hep-th].
  %%CITATION = ARXIV:1005.1293;%%


\bibitem{YoungBryan}
  J. Bryan, B. Young,
  ``Generating functions for colored 3D Young diagrams 
  and the Donaldson-Thomas invariants of orbifolds,''
  arXiv:0802.3948 [math.CO].

\bibitem{NagaoVO}
  K.~Nagao,
  ``Non-commutative Donaldson-Thomas theory and vertex operators,''
  arXiv:0910.5477 [math.AG].

\bibitem{NY}
  K.~Nagao and M.~Yamazaki,
  ``The Non-commutative Topological Vertex and Wall Crossing Phenomena,''
  arXiv:0910.5479 [hep-th].
  %%CITATION = ARXIV:0910.5479;%%

\bibitem{Sulkowski}
  P.~Su{\l}kowski,
  ``Wall-crossing, free fermions and crystal melting,''
  arXiv:0910.5485 [hep-th].

\bibitem{Young}
  B. Young, 
  ``Computing a pyramid partition generating function 
   with dimer shuffling,''
  arXiv:0709.3079 [math.CO].

\bibitem{OR} A.~Okounkov and N.~Reshetikhin, 
``Correlation function of 
Schur process with application to local geometry of 
a random 3-dimensional Young diagram,'' 
  J. Amer. Math. Soc. {\bf 16}, 581 (2003). 

\bibitem{NN}
  K.~Nagao and H.~Nakajima,
  ``Counting invariant of perverse coherent sheaves and its wall-crossing,''
  arXiv:0809.2992 [math.AG].
  %%CITATION = ARXIV:0809.2992;%%

\bibitem{JM}
  D.~L.~Jafferis and G.~W.~Moore,
  ``Wall crossing in local Calabi Yau manifolds,''
  arXiv:0810.4909 [hep-th].
  %%CITATION = ARXIV:0810.4909;%%

\bibitem{AOVY}
  M.~Aganagic, H.~Ooguri, C.~Vafa and M.~Yamazaki,
  ``Wall crossing and M-theory,''
  arXiv:0908.1194 [hep-th].
  %%CITATION = ARXIV:0908.1194;%%

\bibitem{Aganagic}
  M.~Aganagic, K.~Schaeffer,
  ``Wall Crossing, Quivers and Crystals,''
  arXiv:1006.2113 [hep-th].

\bibitem{Nagao1}
  K.~Nagao,
  ``Derived categories of small toric Calabi-Yau 3-folds and counting invariants,''
  arXiv:0809.2994 [math.AG].
  %%CITATION = ARXIV:0910.5477;%%

\bibitem{Yamazaki}
  M.~Yamazaki,
  ``Crystal Melting and Wall Crossing Phenomena,''
  arXiv:1002.1709 [hep-th].
  %%CITATION = ARXIV:1002.1709;%%

\bibitem{AY}
  M.~Aganagic and M.~Yamazaki,
  ``Open BPS Wall Crossing and M-theory,''
  Nucl.\ Phys.\  B {\bf 834}, 258 (2010)
  [arXiv:0911.5342 [hep-th]].
  %%CITATION = NUPHA,B834,258;%%

\bibitem{Sulkowski2}
  P.~Su{\l}kowski,
  ``Wall-crossing, open BPS counting and matrix models,''
  arXiv:1011.5269 [hep-th].
  %%CITATION = ARXIV:1011.5269;%%

\bibitem{EynardTASEP}
  B.~Eynard,
  ``A Matrix model for plane partitions and TASEP,''
  J.\ Stat.\ Mech.\  {\bf 0910} P10011, (2009) 
  [arXiv:0905.0535 [math-ph]].
  %%CITATION = JSTAT,0910,P10011;%%

\bibitem{BaikRains}
  J.~Baik and E.~M.~Rains, 
  ``Algebraic aspects of increasing subsequences,''
  Duke\ Math.\ J. {\bf 109}, 1 (2001).  

\bibitem{Hikami}
  K.~Hikami and T.~Imamura, 
  ``Vicious Walkers and Hook Young Tableaux,''
  J.\ Phys. \ A  {\bf 36}, 3033 (2003), 
  [arXiv:cond-mat/0209512].

\bibitem{Szabo}
  R.~J.~Szabo and M.~Tierz,
  ``Matrix models and stochastic growth in Donaldson-Thomas theory,''
  arXiv:1005.5643 [hep-th].
  %%CITATION = ARXIV:1005.5643;%%

\bibitem{Takasaki}
  K.~ Ueno and K.~ Takasaki, ``Toda lattice hierarchy,'' 
  Adv. \ Stud. \ Pure\  Math.\ {\bf 4} 1 (1984).

\bibitem{Kharchev}
  S.~Kharchev, A.~Marshakov, A.~Mironov, A.~Orlov and A.~Zabrodin,
  ``Matrix models among integrable theories: Forced hierarchies and operator
  formalism,''
  Nucl.\ Phys.\  B {\bf 366}, 569 (1991).
  %%CITATION = NUPHA,B366,569;%%

\bibitem{ADKMV}
  M.~Aganagic, R.~Dijkgraaf, A.~Klemm, M.~Marino and C.~Vafa,
  ``Topological strings and integrable hierarchies,''
  Commun.\ Math.\ Phys.\  {\bf 261}, 451 (2006)
  [arXiv:hep-th/0312085].
  %%CITATION = CMPHA,261,451;%%

\bibitem{GMN2}
  D.~Gaiotto, G.~W.~Moore and A.~Neitzke,
  ``Wall-crossing, Hitchin Systems, and the WKB Approximation,''
  arXiv:0907.3987 [hep-th].
  %%CITATION = ARXIV:0907.3987;%%

\bibitem{CNVTBA}
  S.~Cecotti, A.~Neitzke and C.~Vafa,
  ``R-Twisting and 4d/2d Correspondences,''
  arXiv:1006.3435 [hep-th].
  %%CITATION = ARXIV:1006.3435;%%

\bibitem{mm-lens}
  M.~Marino,
  ``Chern-Simons theory, matrix integrals, and perturbative three-manifold
  invariants,''
  Commun.\ Math.\ Phys.\  {\bf 253}, 25 (2004)
  [arXiv:hep-th/0207096].
  %%CITATION = CMPHA,253,25;%%

\bibitem{CSmatrix}
  M.~Aganagic, A.~Klemm, M.~Marino and C.~Vafa,
  ``Matrix model as a mirror of Chern-Simons theory,''
  JHEP {\bf 0402}, 010 (2004)
  [arXiv:hep-th/0211098].
  %%CITATION = JHEPA,0402,010;%%

\bibitem{Okuda}
  T.~Okuda,
  ``Derivation of Calabi-Yau crystals from Chern-Simons gauge theory,''
  JHEP {\bf 0503}, 047 (2005)
  [arXiv:hep-th/0409270].
  %%CITATION = JHEPA,0503,047;%%

\bibitem{Marino}
  M. Marino,
  {\it Chern-Simons Theory, Matrix Models, And Topological Strings},
  Oxford University Press (2005).

\bibitem{KOS}
R. Kenyon, A. Okounkov, S. Sheffield,
  ``Dimers and Amoebae,''  arXiv:math-ph/0311005.

\bibitem{OY2}
  H.~Ooguri and M.~Yamazaki,
  ``Emergent Calabi-Yau Geometry,''
  Phys.\ Rev.\ Lett.\  {\bf 102}, 161601 (2009)
  [arXiv:0902.3996 [hep-th]].
  %%CITATION = PRLTA,102,161601;%%

\bibitem{HoriVafa}
  K.~Hori and C.~Vafa,
  ``Mirror symmetry,''
  arXiv:hep-th/0002222.
  %%CITATION = HEP-TH/0002222;%%

\bibitem{BCOV}
  M.~Bershadsky, S.~Cecotti, H.~Ooguri and C.~Vafa,
  ``Kodaira-Spencer theory of gravity and exact results for quantum string
  amplitudes,''
  Commun.\ Math.\ Phys.\  {\bf 165}, 311 (1994)
  [arXiv:hep-th/9309140].
  %%CITATION = CMPHA,165,311;%%

\bibitem{BKMP}
  V.~Bouchard, A.~Klemm, M.~Marino and S.~Pasquetti,
  ``Remodeling the B-model,''
  Commun.\ Math.\ Phys.\  {\bf 287}, 117 (2009)
  [arXiv:0709.1453 [hep-th]].
  %%CITATION = CMPHA,287,117;%%

\bibitem{EO}
  B.~Eynard and N.~Orantin,
  ``Invariants of algebraic curves and topological expansion,''
  arXiv:math-ph/0702045.
  %%CITATION = MATH-PH/0702045;%%

\bibitem{EynardTopological}
  B.~Eynard, A.~K.~Kashani-Poor and O.~Marchal,
  ``A matrix model for the topological string I: Deriving the matrix model,''
  arXiv:1003.1737 [hep-th];
%\bibitem{Eynard:2010vd}
%  B.~Eynard, A.~-K.~Kashani-Poor, O.~Marchal,
  ``A Matrix model for the topological string II. The spectral curve and mirror geometry,'' arXiv:1007.2194 [hep-th].

\bibitem{JimboMiwa}
  M. Jimbo and T. Miwa,
  ``Solitons and Infinite Dimensional Lie Algebras,'' 
  Kyoto University, RIMS {\bf 19}, 943 (1983).

\bibitem{Fisher}
  M.~E.~Fisher,
  ``Walks, walls, wetting, and melting,''
  J.\ Stat.\ Phys.\ {\bf 34}, 667 (1984).

\bibitem{ASEP}
  J.~MacDonald, J.~Gibbs and A.~Pipkin,
  ``Kinetics of biopolymerization on nucleic acid templates,'' 
  Biopolymers {\bf 6}, 1 (1968);
  F.~Spitzer,``Interaction of Markov processes.'' Adv.\ Math.\ {\bf 5}, 246 (1970).

\bibitem{Lindstrom}
  B.~Lindstr{\"o}m, ``On the vector representations of induced matroids,''
  Bull.\ London Math.\ Soc.\ {\bf 5}, 85 (1973).

\bibitem{GesselV}
  I.~Gessel and G.~Viennot, ``Binomial determinants, paths, and hook length
  formulae,'' Adv.\ in Math.\ {\bf 58}, 300 (1985).

\bibitem{KarlinMcgregor}
  S.~Karlin and J.~G.~McGregor, ``Coincidence probabilities,'' Pacific J.\ Math.\ {\bf 9}, 1141 (1959).

\bibitem{Heine}
  H.~ Heine, ``Kugelfunktionen,'' Berlin, 1878 and 1881; reprinted, Physica Verlag, W\"{u}rzburg, 1961.

\bibitem{Szego}
  G. Szeg\"{o},
  {\it Orthogonal Polynomials}, 
  3rd ed., Amer.\ Math.\ Soc. \ Colloq.\ Publ., {\bf 23}, 
  Amer.\ Math.\ Soc.\, Providence, 1967.

\bibitem{Bridgeland}
  T.~Bridgeland, ``Stability conditions on triangulated categories,''
  Ann.\ of Math.\ {\bf 166} 317 (2007) [arXiv:math.AG/0212237].

\end{thebibliography}
\end{document}